\newtheorem{propo}{Proposition}
\newtheorem{definition}{Definition}
\newtheorem{remark}{Remark}
\newtheorem{example}{Example}
\begin{document}

\title{Explicit formulas for adiabatic elimination with fast unitary dynamics}

\author{Angela Riva, Alain Sarlette, and Pierre Rouchon%
\thanks{The authors are at Laboratoire de Physique de l'Ecole normale supérieure, ENS-PSL, CNRS, Inria, Centre Automatique et Systèmes (CAS), Mines Paris, Université PSL, Sorbonne Université, Université Paris Cité, Paris, France. AS is also at Dept. of Electronics and Information Science, UGent, Belgium.}
}

\begin{abstract}
The so-called ``adiabatic elimination'' of fast decaying degrees of freedom in open quantum systems can be performed with a series expansion in the timescale separation. The associated computations are significantly more difficult when the remaining degrees of freedom (center manifold) follow fast unitary dynamics instead of just being slow. This paper highlights how a formulation with Sylvester's equation and with adjoint dynamics leads to systematic, explicit expressions at high orders for settings of physical  interest.
\end{abstract}

\maketitle

\section{Introduction}

Using quantum effects in practical technological platforms relies on balancing two seemingly contradictory requirements: maintaining isolation from the environment to protect the fragile quantum coherences, and simultaneously allowing for fast manipulation towards performing operations. This is the fundamental problem that quantum control seeks to solve. 
The specific setups engineered in this context are often composed of several components, whose dimensions multiply and span several timescales. The ``useful quantum information'' represents only a small, ideally non-decaying part of this system. Fast decaying degrees of freedom (DOFs) are thus associated to auxiliary control signals, as e.g.~when stabilizing a quantum system via dissipation engineering \cite{poyatos1996quantum}, \cite{metelmann2015nonreciprocal}, \cite{mirrahimi2014dynamically}. Model reduction techniques seek to summarize the effect of these fast stabilizing DOFs on the dynamics of the target variables, preferably with operational principles and analytic expressions to guide system design. 

The linearity of the master equation for Markovian open quantum systems allows, in principle, a spectral decomposition very much like Hamiltonian (block-)diagonalization. Specific approaches in the physics literature rather focus on interpreting dissipative systems as being subject to uncertain events \cite{albash2012quantum, finkelstein2020adiabatic, reiter2012effective}. While this may be natural in many situations, it fails to provide a general and systematic approach. Our research group has initiated a  more mathematical line of work: identifying, via series expansion, the variables and dynamics corresponding to an invariant subspace where dynamics is at the slowest timescale. The approach was applied to perform model reduction inside a single system (Cartesian product, \cite{azouit2016adiabatic}) and in composite systems (tensor product, \cite{azouit2017towards}). It has recently been translated to the Heisenberg picture, with particular benefits for multi-partite systems \cite{RegentRouchonCDC2023,RegentRouchonPRA2024}.

All previously mentioned studies address scenarios where a rapid relaxation, characterized by eigenvalues with negative real parts, surrounds a slower evolution on an invariant subspace associated with eigenvalues near zero. However, like in the center manifold theory, the same model reduction technique should also admit a rapid relaxation towards an invariant subspace \emph{featuring large, almost purely imaginary eigenvalues} (unitary dynamics). This would fit practical situations where e.g.~the stabilized state keeps turning at significant characteristic frequencies, prefiguring in/out of resonance control \cite{SamPap}. Such a case was first treated in \cite{forni2018adiabatic,forni2019palette}, establishing the abstract expressions up to second-order expansion, under technical conditions on the fast unitary dynamics. In the present work, we provide integral formulas removing these conditions (Section \ref{sec:result}). We then highlight how, for typical systems of interest --- namely, eliminating a rapidly decaying harmonic oscillator environment --- these integral formulas also lead more easily to interpretable explicit expressions for the long-term dynamics on the invariant manifold (see the physical examples of section \ref{sec:examples}). The two key technical ingredients are the use of stable Sylvester equation solutions (Proposition \ref{prop1}) and the treatment of the typical dynamics on the adjoint state (Proposition \ref{prop2}).

\section{Model and Methods}\label{sec:model}

This section recalls the setting and adiabatic elimination approach as used e.g.~in \cite{azouit2017towards,forni2018adiabatic,tokieda2022completev2,RegentRouchonCDC2023}.

We consider a density operator $\rho_t$ at time $t$, defined over a bipartite Hilbert space $\mathcal{H} = \mathcal{H}_A \otimes \mathcal{H}_B$. The time evolution of this operator is described by the linear dynamics:
\begin{equation}
    \label{eq:model}
    \dot \rho_t = \mathcal{L} (\rho_t) = -i (H_A \otimes {I}_B)^\times(\rho_t)
    + \mathcal{I}_A \otimes \mathcal{L}_B(\rho_t) 
    -i g [H_{I}, \rho_t]. 
\end{equation}
Here $\mathbb \mathbb{I}_M$ is the identity operator and $\mathcal{I}_M$ the identity superoperator, associated to Hilbert space $\mathcal{H}_M$. We denote $H^\times(\bullet) \coloneqq [H, \bullet]$ the commutator with the Hamiltonian hermitian operator $H$, i.e.~$[H,X]=HX-XH$. The Lindbladian term for mode $B$ takes the general form:
\begin{equation}\label{eq:fast_lindbladian_model}
\begin{aligned}
    \mathcal{L}_B(\bullet) = -i H_B^\times(\bullet) + \kappa\; {\textstyle \sum_k}\, \mathcal{D}[L_k](\bullet)\; ,
\end{aligned}
\end{equation}
with $\kappa$ the damping rate and $\mathcal{D}[L](\bullet) \coloneqq L \bullet L^\dagger -\frac{1}{2} (L^\dagger L \bullet + \bullet L^\dagger L)$  the dissipation channel associated to operator $L$. The coupling  between $\mathcal{H}_A$ and  $\mathcal{H}_B$  is weak and Hamiltonian: 
$$g H_I = g\;  {\textstyle \sum_{k}} A_k \otimes B_k$$
with $A_k$ and $B_k$  operators on $\mathcal{H}_A$ and  $\mathcal{H}_B$ respectively and $g$ a small positive parameter.  

The adiabatic elimination procedure is based on the assumption that the system exhibits a clear separation of timescales, allowing for a perturbative expansion in the small parameter $\epsilon=g/\kappa \ll 1$. The dynamics for $g=\epsilon=0$ is assumed ``relatively easy to compute'', with $\mathcal{L}_B$ converging exponentially towards an equilibrium state $\bar \rho_B$ and subsystem $A$ independently rotating with a Hamiltonian $H_A$; thus, $\mathcal{L}(\rho)$ then features an invariant subspace $\mathcal{S}_0$ spanned by $\{ \rho = \rho_s \otimes \bar\rho_B \text{, for any } \rho_s \text{ on } \mathcal{H}_A\}$ and on which all eigenvalues are purely imaginary (center manifold). For $\epsilon \neq 0$, the interaction Hamiltonian $H_I$ acts as a perturbation on this situation. The goal is then to efficiently compute the ``long-term dynamics'', taking place on $\mathcal{S}_\epsilon$, an invariant subspace $\epsilon$-close to $\mathcal{S}_0$ and characterized by the eigenoperators of $\mathcal{L}(\rho)$ associated to eigenvalues with real part close to zero.

When $\epsilon \neq 0$, $H_I$ creates correlations between subsystems $A$ and $B$. As a result, the invariant subspace $\mathcal{S}_\epsilon$ does not take the same form as $\mathcal{S}_0$ anymore. Still, it is convenient to parameterize states on $\mathcal{S}_\epsilon$ by using $\rho_s \otimes \bar\rho_B \in \mathcal{S}_0$ as coordinates. Indeed, the dynamics are then reduced to $\rho_s$ whose evolution can be interpreted in comparison to the case $\epsilon=0$. To characterize the long-term behavior of the system, we aim to find two linear, time independent maps: $\mathcal{L}_s$, describing the dynamics of the reduced system coordinates $\dot \rho_s = \mathcal{L}_s(\rho_s)$; and $\mathcal{K}$, mapping $\rho_s$ to the solution $\rho = \mathcal{K}(\rho_s) \in \mathcal{S}_\epsilon$ of the complete dynamics \eqref{eq:model}. Expressing that the evolution of $\rho_s$ must mirror the one of $\mathcal{K}(\rho_s)$ leads to the condition:
\begin{equation}\label{eq:invariance_general}
    \mathcal{K}(\mathcal{L}_s (\rho_s)) = \mathcal{L}(\mathcal{K}(\rho_s)).
\end{equation}

Following usual approximation theory, we solve \eqref{eq:invariance_general} by expanding the maps $\mathcal{L}_s$ and $\mathcal{K}$ in powers of $\epsilon \ll 1$,
\begin{equation}\label{eq:expansions}
    \mathcal{L}_s(\rho_s) = \sum_{j=0}^\infty \epsilon^j \mathcal{L}_{s,j}(\rho_s), \quad \mathcal{K}(\rho_s) = \sum_{j=0}^\infty \epsilon^j \mathcal{K}_{j}(\rho_s) \; ,
\end{equation}
imposing to satisfy \eqref{eq:invariance_general} separately at each order $\epsilon^j$. Since the lowest-order contributions are the dominant ones, usually $\mathcal{L}_{s,j}$, $\mathcal{K}_j$ are computed explicitly for the first few orders only. 
Note that these solutions are not unique. For instance, for any solution $\mathcal{L}_s,\mathcal{K}$ of \eqref{eq:invariance_general} and $U$ any fixed unitary, $\mathcal{L}'_s(\bullet) = U \mathcal{L}_s(U^\dagger \bullet U) U^\dagger$ and $\mathcal{K}'(\bullet) = \mathcal{K} (U^\dagger \bullet U)$ are also solution of \eqref{eq:invariance_general}. This effectively corresponds to different choices of mapping between $\rho_s$ and $\mathcal{S}_{\epsilon}$ (reviewable at each order in $\epsilon$).

The most natural solution to \eqref{eq:invariance_general} at order $\epsilon^0$ is
\begin{equation} \label{eq:order0}
    \mathcal{L}_{s,0}(\rho_s) = -i H_A^\times(\rho_s), \quad \mathcal{K}_{0}(\rho_s) = \rho_s\otimes \bar \rho_B \, .
\end{equation}
In this paper, we further specify the mapping choice by imposing the \textit{partial trace gauge}, namely $\rho_s = \Tr_B(\rho) = \Tr_B(\mathcal{K}(\rho_s))$ at all orders of approximation. This gauge choice implies that $\Tr_B\mathcal{K}_1 = \Tr_B\mathcal{K}_2 = ... \equiv 0$.
The partial trace $\Tr_B$ over $\mathcal{H}_B$ of the superoperator $\mathcal{K}$ is naturally defined by first applying $\mathcal{K}$ to any operator $X$ (extended from the positive and Hermitian $\rho_s$ by linearity), then taking the standard partial trace of the result.

Note that when $H_A=0$, we get $\mathcal{L}_{s,0}=0$ which simplifies all further orders (see e.g.~\cite{azouit2017towards}). Our contributions are meant to address the difficulties appearing in solving \eqref{eq:invariance_general},\eqref{eq:expansions} at orders $\epsilon^1, \, \epsilon^2$ for  $H_A\neq 0$. In contrast with \cite{forni2019palette,forni2018adiabatic}, our explicit results require no assumptions on $H_A$ and they remain practical when the dimension of $\mathcal{H}_A$ is infinite. The only important assumption is the fast convergence of $\mathcal{L}_B$.

\section{Result} \label{sec:result}

Obtaining according to \eqref{eq:invariance_general},\eqref{eq:expansions} the elements $\mathcal{L}_{s,0},\mathcal{K}_0$ (see \eqref{eq:order0}) and 
\begin{equation}\label{eq:ls1}
   \epsilon \mathcal{L}_{s,1}(\rho_s) = - i g \; {\sum_k} \, \Tr(B_k \bar\rho_B) [A_k, \rho_s]
\end{equation}
(see proof of Proposition \ref{prop1} below) poses no particular problem. Our results concern the next order. Proposition \ref{prop1} derives a new integral closed form for the map $\mathcal{K}_1$ by solving a Sylvester equation. In Proposition \ref{prop2}, the second-order reduced dynamics $\mathcal{L}_{s,2}$, involving the commutator between $H_I$ and $\mathcal{K}_1$, is re-expressed with this form, putting the integral on a dual operator which is more systematically manageable in typical situations. Examples illustrate the usefulness of these formulas in Section \ref{sec:examples}. We start by defining some simplifying notation. 

\begin{definition}
    Operator $B_{0,k}$ on $\mathcal{H}_B$ is defined as
    \begin{equation}
    B_{0,k} = B_k - \Tr_B(B_k \bar \rho_B){I}_B.
    \end{equation}
\end{definition}
\begin{definition}\label{def:Ak}
    Operator $A_{k}^-(t)$ on $\mathcal{H}_A$ is defined as
    \begin{equation}
        A_k^- (t) =  e^{-i t H_A^\times}(A_k)= e^{-i t H_A}A_k e^{i t H_A}.
    \end{equation}
    (Note that this is the opposite propagation to the Heisenberg picture evolution under $H_A$, see Remark \ref{rem1}.)

\end{definition}

\begin{propo}\label{prop1}
Consider model \eqref{eq:model} and the adiabatic elimination expansion of Section \ref{sec:model} with partial trace gauge. The first order map $\mathcal{K}_1$ is given by
    \begin{equation}
    \begin{aligned}\label{eq:k1}
        \epsilon \mathcal{K}_1(\rho_s) = - i g   \sum_k \int_0^{+\infty} \Big\{ A_k^-(t) \rho_s \otimes e^{t \mathcal{L}_B}(B_{0,k} \bar \rho_B) 
         - \rho_s A_k^-(t) \otimes e^{t \mathcal{L}_B}(\bar \rho_B B_{0,k} )\Big\} dt.
    \end{aligned}
    \end{equation}
\end{propo}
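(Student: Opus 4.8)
The plan is to substitute the expansions \eqref{eq:expansions} into the invariance condition \eqref{eq:invariance_general} and collect the terms of order $\epsilon^1$. Writing $\mathcal{L} = \mathcal{L}_0 + \epsilon \tilde{\mathcal{L}}_1$ with $\mathcal{L}_0(\rho) = -i(H_A\otimes I_B)^\times(\rho) + \mathcal{I}_A\otimes\mathcal{L}_B(\rho)$ and $\epsilon\tilde{\mathcal{L}}_1(\rho) = -ig[H_I,\rho]$, the order-$\epsilon^1$ equation reads
\begin{equation}\label{eq:order1plan}
    \mathcal{K}_0(\mathcal{L}_{s,1}(\rho_s)) + \mathcal{K}_1(\mathcal{L}_{s,0}(\rho_s)) = \mathcal{L}_0(\mathcal{K}_1(\rho_s)) + \tilde{\mathcal{L}}_1(\mathcal{K}_0(\rho_s)).
\end{equation}
First I would apply $\Tr_B$ to \eqref{eq:order1plan}; using the partial trace gauge $\Tr_B\mathcal{K}_1 = 0$, the fact that $\Tr_B$ annihilates $\mathcal{L}_B$-images and intertwines $\mathcal{I}_A\otimes\mathcal{L}_B$ appropriately, and $\Tr_B\bar\rho_B = 1$, the right-hand side collapses to $-ig\sum_k \Tr_B(B_k\bar\rho_B)[A_k,\rho_s]$, while the left-hand side gives $\mathcal{L}_{s,1}(\rho_s) - i[H_A,\Tr_B\mathcal{K}_1(\rho_s)] = \mathcal{L}_{s,1}(\rho_s)$. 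This yields \eqref{eq:ls1}. Substituting $\mathcal{L}_{s,1}$ and $\mathcal{L}_{s,0}(\rho_s) = -iH_A^\times(\rho_s)$ back into \eqref{eq:order1plan} leaves a linear equation for $\mathcal{K}_1$ of Sylvester type: the unknown appears both through $\mathcal{L}_0(\mathcal{K}_1(\rho_s))$ and through $\mathcal{K}_1(-iH_A^\times(\rho_s))$, i.e. schematically $\mathcal{L}_0 \circ \mathcal{K}_1 - \mathcal{K}_1 \circ (-iH_A^\times) = \mathcal{M}$ for a known source $\mathcal{M}$ built from $H_I$, $\bar\rho_B$ and the commutator with $H_A\otimes I_B$.

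Next I would exhibit the unique stable solution of this Sylvester equation via the standard integral representation $\mathcal{K}_1(\rho_s) = -\int_0^{+\infty} e^{t\mathcal{L}_0}\,\mathcal{M}\big(e^{-t(-iH_A^\times)}(\rho_s)\big)\,dt$, which converges because $\mathcal{L}_0$ restricted to the complement of $\mathcal{S}_0$ is exponentially stable (the $\mathcal{L}_B$ factor contracts) while the $e^{itH_A^\times}$ factor is merely oscillatory, and because the source $\mathcal{M}$ has zero component along $\mathcal{S}_0$ by construction of $B_{0,k}$ (replacing $B_k$ by $B_{0,k}$ removes exactly the part of the source that lies in the kernel, consistently with the gauge choice). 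Concretely, $e^{t\mathcal{L}_0}$ factorizes as $e^{-itH_A^\times}\otimes e^{t\mathcal{L}_B}$ on the tensor structure, so acting with it on the source $-ig\sum_k (A_k\otimes B_{0,k})(\rho_s\otimes\bar\rho_B) + (\text{h.c.-type term})$ and simultaneously propagating $\rho_s$ by $e^{itH_A^\times}$ produces $A_k^-(t)\rho_s \otimes e^{t\mathcal{L}_B}(B_{0,k}\bar\rho_B)$ and the corresponding right-multiplication term, because the two $e^{\pm itH_A^\times}$ on the $A$-factor partially cancel leaving only the one-sided rotations $e^{-itH_A}A_k e^{itH_A} = A_k^-(t)$ as in Definition \ref{def:Ak}. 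Collecting signs gives exactly \eqref{eq:k1}. Finally I would verify directly that this $\mathcal{K}_1$ satisfies the partial trace gauge: $\Tr_B e^{t\mathcal{L}_B}(B_{0,k}\bar\rho_B) = \Tr_B(B_{0,k}\bar\rho_B) = \Tr_B(B_k\bar\rho_B) - \Tr_B(B_k\bar\rho_B) = 0$ for all $t$, so $\Tr_B\mathcal{K}_1 = 0$, confirming consistency.

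The main obstacle I anticipate is justifying convergence and well-posedness of the integral \eqref{eq:k1} without the extra technical assumptions on the fast unitary dynamics used in \cite{forni2018adiabatic}: one must argue carefully that the oscillatory factor $A_k^-(t)$, which does not decay, is harmless because it is multiplied by $e^{t\mathcal{L}_B}(B_{0,k}\bar\rho_B)$, whose norm decays exponentially precisely because $B_{0,k}\bar\rho_B$ lies in the range of $\mathcal{L}_B$ (equivalently, is trace-zero and orthogonal to the stationary direction), so the integrand is $O(e^{-ct})$ uniformly. A secondary point requiring care is the uniqueness/gauge argument: showing that among all solutions of the Sylvester equation the integral formula is the one selected by the partial trace gauge, i.e. that the homogeneous solutions (operators in $\mathcal{S}_0$, which would have nonzero partial trace on $A$) are exactly excluded. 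Both points are more bookkeeping than deep once the factorization $e^{t\mathcal{L}_0} = e^{-itH_A^\times}\otimes e^{t\mathcal{L}_B}$ and the spectral picture of $\mathcal{L}_B$ are in place.
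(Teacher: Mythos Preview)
Your proposal is correct and follows essentially the same route as the paper: extract $\mathcal{L}_{s,1}$ by taking $\Tr_B$ of the order-$\epsilon$ invariance condition, recognize the remaining equation for $\mathcal{K}_1$ as a Sylvester equation $AX+XB=C$ with $A=\mathcal{L}_0$ and $B=iH_A^\times$, write its integral solution, and use the factorization $e^{t\mathcal{L}_0}=e^{-itH_A^\times}\otimes e^{t\mathcal{L}_B}$ together with $\Tr(B_{0,k}\bar\rho_B)=0$ to verify convergence and obtain \eqref{eq:k1}. Your additional explicit check that $\Tr_B\mathcal{K}_1=0$ and your remark on how the gauge selects among homogeneous solutions are nice complements, but the substance of the argument is identical to the paper's.
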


\begin{proof}
    By inserting \eqref{eq:expansions} in the general invariance equation \eqref{eq:invariance_general} and grouping terms of order $\epsilon$, we obtain:
    \begin{equation}\label{eq:invariance_I}
         -i [H_A\otimes {I}_B, \epsilon\mathcal{K}_1(\rho_s)] + \mathcal{I}_A \otimes \mathcal{L}_B (\epsilon\mathcal{K}_1 (\rho_s)) \\ - i g [H_{I}, \mathcal{K}_0(\rho_s)]
         = \mathcal{K}_0(\epsilon\mathcal{L}_{s,1}(\rho_s)) + \epsilon\mathcal{K}_1(\mathcal{L}_{s,0}(\rho_s)).
    \end{equation}
    This equation has two unknowns: $\mathcal{L}_{s1}$ and $\mathcal{K}_1$. Two terms involving $\mathcal{K}_1$ are due to $H_A \neq 0$ and therefore new with respect to \cite{azouit2017towards}. Nevertheless, thanks to the gauge choice with $\Tr_B(\mathcal{K}_1)=0$, we can much like in \cite{azouit2017towards} apply $\Tr_B$ to \eqref{eq:invariance_I} and directly isolate $\mathcal{L}_{s1}$ in the form \eqref{eq:ls1}. To get this, we further observe that $\Tr_B(\mathcal{L}_B)=0$ and $\Tr_B(H_A^\times(\bullet))=H_A^\times(\Tr_B(\bullet))$. Next, inserting \eqref{eq:ls1} into \eqref{eq:invariance_I} yields the condition:
    \begin{equation}\label{eq:Sylvester}
            -i[H_A \otimes {I}_B, \epsilon\mathcal{K}_1(\rho_s)] + \mathcal{I}_A \otimes \mathcal{L}_B (\epsilon\mathcal{K}_1(\rho_s)) + i \epsilon\mathcal{K}_1  \left([H_A, \rho_s]\right) 
            = ig\, {\textstyle \sum_k}\, \left[A_k \otimes B_{0,k}, \rho_s \otimes \bar \rho_B \right],
    \end{equation}
    to be solved for $\mathcal{K}_1$. At the level of superoperators, \eqref{eq:Sylvester} is in fact a Sylvester equation of the form $AX + XB = C$, where $\epsilon\mathcal{K}_1$ plays the role of the unknown term $X$. To be explicit: $A \to -i (H_A\otimes{I}_B)^\times(\bullet) + \mathcal{I}_A \otimes \mathcal{L}_B (\bullet)$, $B \to i H_A^\times(\bullet)$, $C \to i g \sum_k [A_k \otimes B_{0,k}, \bullet \otimes \bar \rho_B]$. The solution to this Sylvester equation can be written as
    \begin{equation}\label{eq:SylSol}
            X = - \int_0^{+\infty} e^{tA} C e^{tB} dt,
    \end{equation}
    provided 
    \begin{equation}\label{eq:SylvCondition}
        \lim_{t\to + \infty} e^{tA} C e^{tB} = 0 \; .
    \end{equation}
    Using that (super)operators acting on different subsystems commute, as well as $e^{iH^\times}(Q) = e^{iH} Q e^{-iH}$, we obtain:
    \begin{equation}\nonumber
        \begin{aligned}
            &e^{tA} C e^{tB} (\rho_s) \, / g \\
             &= i \sum_k e^{t((-iH_A\otimes\mathbb{I}_B)^\times+\mathcal{I}_A\otimes\mathcal{L}_B)} \big[A_k \otimes B_{0,k}, e^{it H_A^\times}(\rho_s) \otimes \bar \rho_B\big] \\
            &= i \sum_k \Big( e^{-it H_A^\times}(A_k) \rho_s \otimes e^{t \mathcal{L}_B}\big(B_{0,k} \bar \rho_B\big) - \rho_s e^{-it H_A^\times}(A_k) \otimes e^{t \mathcal{L}_B}\big( \bar \rho_B B_{0,k}\big) \Big) \; .
        \end{aligned}
    \end{equation}
This last expression corresponds to the proposition  statement. Since $\lim_{t\to +\infty} e^{t \mathcal{L}_B}(\bullet) = \Tr(\bullet) \bar\rho_B$, we have $\lim_{t \to +\infty} e^{t \mathcal{L}_B}(B_{0,k} \bar \rho_B)  = \lim_{t \to +\infty} e^{t \mathcal{L}_B}(\bar \rho_B B_{0,k}) = 0$, such that condition \eqref{eq:SylvCondition} is satisfied. 
\end{proof}

\begin{remark}\label{rem1}
Note that $A_k^-(t)$ in the statement of Proposition \ref{prop1} follows the opposite dynamics to the Heisenberg evolution under $H_A$. We can understand this as follows. According to $\mathcal{L}_{s,0}$, $\rho_s$ already rotates with $H_A$. Inside the integral of \eqref{eq:k1}, we must replace it by $(A_k \rho_s)$ rotating with $H_A$, much like the second tensor factor takes $(B_{0,k} \bar{\rho}_B)$ evolving under $\mathcal{L}_B$.
\end{remark}

Compared to previous work \cite{forni2018adiabatic,forni2019palette}, The proposition \ref{prop1} avoids any technical conditions and the need to solve several equations by treating the components of $H_A$ individually. In turn, it does not guarantee positivity --- whose importance has anyways been re-evaluated since \cite{tokieda2022completev2} --- and it leaves a propagator equation to solve, for all $t$ and separately for each subsystem. Propagation on subsystem A, with a fixed Hamiltonian $H_A$, is usually no big deal. Often, $H_A$ can be diagonalized efficiently or commuted through typical operators to compute $A_k^-(t)$.  Propagation on subsystem B, with a Lindbladian superoperator, is computationally less convenient. Indeed, even for simple steady-state cases like convergence towards a thermally broadened vacuum, it is not trivial to express the state at any time along the trajectory starting from rather arbitrary initial states like $B_{0,k} \bar\rho_B$. The following result mitigates this difficulty.

\begin{definition}\label{def:Bk}
    The operator $B_{k}(t)$ on $\mathcal{H}_B$, evolving under the action of $\mathcal{L}_B$ in the Heisenberg picture, is defined as:
    \begin{equation}\label{eq:Bkt}
        B_k (t) =  e^{t \mathcal{L}_B^*}(B_k),
    \end{equation}
    with the adjoint $\mathcal{L}^*$ of the Lindbladian  defined as
    $\mathcal{L}^*(\bullet) = +i [H, \bullet] + \kappa \sum_k \left( L_k^\dagger \bullet L_k -\frac{1}{2} (L_k^\dagger L_k \bullet + \bullet L_k^\dagger L_k) \right)$.
\end{definition}

\begin{propo}\label{prop2}
Consider model \eqref{eq:model} and the adiabatic elimination expansion of Section \ref{sec:model} with partial trace gauge. The second order  reduced dynamics is given by:
    \begin{equation}\label{eq:prop}
    \begin{aligned}
        \epsilon^2\, \mathcal{L}_{s,2} = -\frac{g^2}{\kappa}& \sum_{k,l}  \int_0^{+\infty}  \Big\{   \Tr\{B_l (t) B_{0,k} \bar\rho_B\} \big[A_l,\, A_k^-(t)\rho_s \big] - \Tr\{B_l (t)\bar\rho_B B_{0,k}\} \big[A_l,\, \rho_s A_k^-(t)  \big]\Big\}\,dt.
    \end{aligned}
    \end{equation}
\end{propo}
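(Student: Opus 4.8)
The plan is to obtain $\mathcal{L}_{s,2}$ from the order-$\epsilon^2$ part of the invariance equation \eqref{eq:invariance_general}, exactly as $\mathcal{L}_{s,1}$ was obtained at order $\epsilon^1$, and then to substitute the closed form \eqref{eq:k1} of Proposition~\ref{prop1}. First I would insert the expansions \eqref{eq:expansions} into \eqref{eq:invariance_general} and collect the terms of order $\epsilon^2$. Writing $\mathcal{L} = \mathcal{L}^{(0)} - i g\, H_I^\times$ with $\mathcal{L}^{(0)} = -i(H_A\otimes\mathbb{I}_B)^\times + \mathcal{I}_A\otimes\mathcal{L}_B$ and recalling $g=\epsilon\kappa$, the order-$\epsilon^2$ identity reads
\[
  \mathcal{K}_0(\mathcal{L}_{s,2}(\rho_s)) + \mathcal{K}_1(\mathcal{L}_{s,1}(\rho_s)) + \mathcal{K}_2(\mathcal{L}_{s,0}(\rho_s)) = \mathcal{L}^{(0)}(\mathcal{K}_2(\rho_s)) - i\kappa\,[H_I,\mathcal{K}_1(\rho_s)].
\]

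Next I would apply $\Tr_B$ to both sides, using the same structural facts invoked in the proof of Proposition~\ref{prop1}: $\Tr_B\mathcal{K}_0(\sigma)=\sigma$, the partial-trace gauge $\Tr_B\mathcal{K}_1=\Tr_B\mathcal{K}_2=0$, the trace-preservation $\Tr_B(\mathcal{I}_A\otimes\mathcal{L}_B(\bullet))=0$, and $\Tr_B\circ(H_A\otimes\mathbb{I}_B)^\times = H_A^\times\circ\Tr_B$. Every term containing $\mathcal{K}_1$ or $\mathcal{K}_2$ then drops, and the right-hand side collapses to $-i\kappa\,\Tr_B[H_I,\mathcal{K}_1(\rho_s)]$, leaving $\mathcal{L}_{s,2}(\rho_s)= -i\kappa\,\Tr_B[H_I,\mathcal{K}_1(\rho_s)]$, equivalently $\epsilon^2\mathcal{L}_{s,2}=-ig\,\Tr_B[H_I,\epsilon\mathcal{K}_1]$. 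The point worth emphasizing is that this isolates $\mathcal{L}_{s,2}$ without ever solving the full order-$\epsilon^2$ equation for $\mathcal{K}_2$ --- the gauge does that work.

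Then I would substitute $H_I=\sum_l A_l\otimes B_l$ together with the formula \eqref{eq:k1} for $\epsilon\mathcal{K}_1$, exchange $\Tr_B$ with the $t$-integral, and evaluate the partial trace of the commutator with the elementary identity $\Tr_B[X_A\otimes X_B,\,Y_A\otimes Y_B]=[X_A,Y_A]\,\Tr(X_B Y_B)$. Applied to the two tensor terms of $\mathcal{K}_1$, this produces the commutators $[A_l,\,A_k^-(t)\rho_s]$ and $[A_l,\,\rho_s A_k^-(t)]$ with scalar weights $\Tr\{B_l\, e^{t\mathcal{L}_B}(B_{0,k}\bar\rho_B)\}$ and $\Tr\{B_l\, e^{t\mathcal{L}_B}(\bar\rho_B B_{0,k})\}$. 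The final step is to pass these weights to the Heisenberg picture via the adjoint duality $\Tr\{X\, e^{t\mathcal{L}_B}(Q)\} = \Tr\{e^{t\mathcal{L}_B^*}(X)\, Q\} = \Tr\{B_l(t)\, Q\}$ with $B_l(t)$ as in Definition~\ref{def:Bk}, and to rewrite the prefactor $-g^2 = -(g^2/\kappa)\,\kappa$ so as to match the stated normalization with measure $\kappa\,dt$; this yields precisely \eqref{eq:prop}.

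I do not expect a genuine obstacle: given Proposition~\ref{prop1}, the derivation is largely mechanical. The points requiring care are (i) the bookkeeping of powers of $\epsilon$ versus $g=\epsilon\kappa$, so as to land on the correct overall constant; (ii) verifying that \emph{every} $\mathcal{K}_2$ contribution is annihilated by $\Tr_B$, which is what makes the reduced dynamics well defined at this order without extra input; and (iii) justifying the interchange of $\Tr_B$ (and of the scalar trace) with $\int_0^{+\infty}dt$ along with convergence of the integral --- which follows, just as in Proposition~\ref{prop1}, from $e^{t\mathcal{L}_B}(B_{0,k}\bar\rho_B),\ e^{t\mathcal{L}_B}(\bar\rho_B B_{0,k})\to 0$ exponentially, so that the scalar weights $\Tr\{B_l(t)\,B_{0,k}\bar\rho_B\}$ (etc.) decay and the integrand is absolutely integrable.
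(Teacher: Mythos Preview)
Your proposal is correct and follows essentially the same route as the paper: write the order-$\epsilon^2$ invariance condition, apply $\Tr_B$ with the partial-trace gauge to isolate $\epsilon^2\mathcal{L}_{s,2}=-ig\,\Tr_B[H_I,\epsilon\mathcal{K}_1]$, substitute \eqref{eq:k1}, and pass the scalar weights to the Heisenberg picture via the adjoint duality $\Tr\{B_l\,e^{t\mathcal{L}_B}(Q)\}=\Tr\{B_l(t)\,Q\}$. Your explicit mention of the tensor-commutator identity and of the integrability/interchange justification are useful elaborations, but not a departure from the paper's argument.
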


\begin{proof}
In the second order invariance condition: 
{\small 
     \begin{equation}
        \begin{aligned}
            -i [H_A \otimes I_B, \epsilon^2\mathcal{K}_2(\rho_s)] + \mathcal{I}_A \otimes \mathcal{L}_B (\epsilon^2\mathcal{K}_2 (\rho_s)) - i [gH_{I}, \epsilon\mathcal{K}_1(\rho_s)] 
         = \mathcal{K}_0(\epsilon^2\mathcal{L}_{s,2}(\rho_s))  + \epsilon^2 \mathcal{K}_1(\mathcal{L}_{s,1}(\rho_s)) + \epsilon^2\mathcal{K}_2(\mathcal{L}_{s,0}(\rho_s)),
        \end{aligned}
    \end{equation}}
take the partial trace over subsystem B, like we did before Proposition \ref{prop1} towards getting $\mathcal{L}_{s,1}$, in order to obtain:
    \begin{equation*} 
        \epsilon^2 \mathcal{L}_{s,2}(\rho_s) = -i \, \Tr_B\{[gH_{I}, \epsilon \mathcal{K}_1 (\rho_s)]\}.
    \end{equation*}
Substituting $H_I$, and $\mathcal{K}_1$ from Proposition \ref{prop1}, yields:
    \begin{equation}\label{eq:trace}
    \begin{aligned}
        \epsilon^2 \mathcal{L}_{s,2}(\rho_s) = 
        (-ig)^2 \sum_{l,k}\Big\{ \left[A_l,\, \int_0^{+\infty} c_{k,l}(t) A_k^-(t)\, dt\, \rho_s \right]
        - \left[ A_l, \, \rho_s  \int_0^{+\infty} \tilde{c}_{k,l}(t) A_k^-(t)\, dt \right] \Big\}, 
    \end{aligned}
    \end{equation}
where {\small
    \begin{equation}\label{eq:heis}
    \begin{aligned}
        & c_{k,l}(t) = \Tr(B_l e^{t \mathcal{L}_B}(B_{0,k} \bar\rho_B)) = \Tr(e^{t \mathcal{L}_B^*}(B_l) B_{0,k} \bar\rho_B)\\
        & \tilde{c}_{k,l}(t) = \Tr(B_l e^{t \mathcal{L}_B}( \bar\rho_B B_{0,k})) = \Tr(e^{t \mathcal{L}_B^*}(B_l)\bar\rho_B B_{0,k}).
    \end{aligned}
    \end{equation}
    }
On the right-hand side of \eqref{eq:heis}, we have transferred the Lindblad dynamics to the adjoint, in other words evolving the partner operator inside the trace in Heisenberg picture. Recalling Definition \ref{def:Bk}, this corresponds to the statement.  \hfill $\square$
\end{proof}

\begin{remark}
The Proposition does not claim that the resulting reduced dynamics preserves positivity --- thus taking the typical Lindblad form with positive dissipation rates. This property is proven in \cite{azouit2017towards} for $H_A=0$, and in \cite{forni2019palette} under more technical conditions.
\end{remark}

The essential part of Proposition \ref{prop2} is to replace, when computing $\mathcal{L}_{s,2}$, the Lindbladian trajectory starting at $(B_{0,k} \bar{\rho}_B)$ by an adjoint Lindbladian trajectory on $B_l$. As already noted in \cite{Jeremie}, the latter is often much easier to compute in typical situations. For instance, the operator propagation can be computed with little more effort than a classical system, when both $\mathcal{L}_B$ and $B_l$ on subsystem B correspond to a so-called linear quantum system \cite{nurdin2017linear}.

The strategy followed by these two Propositions addresses the same major issue in resolving the series expansion \eqref{eq:invariance_general},\eqref{eq:expansions}, namely the inversion (in matrix inverse sense) of the fast dynamics. When pursuing the expansion at higher orders, the main computational issue remains the same, and we can reiterate the same procedure to obtain explicit expressions.

\begin{propo}\label{prop3}
    Consider model \eqref{eq:model} and the adiabatic elimination expansion of Section \ref{sec:model} with partial trace gauge. Assume that the terms of the series expansion have been computed up to $n-1$. Then, for all $n\geq 1$,
    \begin{equation}\label{eq:Lsn}
        \epsilon \mathcal{L}_{s,n}(\rho_s) = -ig \sum_k \big[A_k \, , \, \Tr_B\big( (I_A \otimes B_k) \mathcal{K}_{n-1}(\rho_s) \big) \big]
    \end{equation}
    and $\epsilon \mathcal{K}_n$ can be computed with an integral formula similar to \eqref{eq:k1}.
\end{propo}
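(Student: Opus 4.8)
The plan is to proceed exactly by induction, mimicking the structure already established for orders $1$ and $2$. Assume $\mathcal{L}_{s,0},\dots,\mathcal{L}_{s,n-1}$ and $\mathcal{K}_0,\dots,\mathcal{K}_{n-1}$ are known. First I would write the invariance condition \eqref{eq:invariance_general} at order $\epsilon^n$: collecting the $\epsilon^n$ coefficient in the expansions \eqref{eq:expansions} gives, on the left, $-i[H_A\otimes\mathcal{I}_B,\epsilon^n\mathcal{K}_n(\rho_s)] + \mathcal{I}_A\otimes\mathcal{L}_B(\epsilon^n\mathcal{K}_n(\rho_s)) - i[gH_I,\epsilon^{n-1}\mathcal{K}_{n-1}(\rho_s)]$, and on the right, $\mathcal{K}_0(\epsilon^n\mathcal{L}_{s,n}(\rho_s)) + \sum_{j=1}^{n-1}\epsilon^n\mathcal{K}_j(\mathcal{L}_{s,n-j}(\rho_s)) + \epsilon^n\mathcal{K}_n(\mathcal{L}_{s,0}(\rho_s))$. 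The only genuinely new unknowns at this order are $\mathcal{L}_{s,n}$ and $\mathcal{K}_n$, since $-igH_I$ is first order and hence multiplies the already-known $\mathcal{K}_{n-1}$.

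Next I would isolate $\mathcal{L}_{s,n}$ by applying $\Tr_B$ to this identity, using the same three facts invoked for $\mathcal{L}_{s,1}$: $\Tr_B\circ\mathcal{L}_B = 0$ (trace preservation of the Lindbladian), $\Tr_B(H_A^\times(\bullet)) = H_A^\times(\Tr_B(\bullet))$, and the partial-trace gauge $\Tr_B\mathcal{K}_j = 0$ for all $j\geq 1$. The $\mathcal{L}_B$ term drops; the two $H_A$-commutator terms on the left and the $\mathcal{K}_n(\mathcal{L}_{s,0}(\rho_s))$ term on the right all vanish under $\Tr_B$ because they are either $\mathcal{L}_B$-images or $\mathcal{K}_n$-images or cancel via the commutation identity; the cross terms $\mathcal{K}_j(\mathcal{L}_{s,n-j}(\rho_s))$ for $1\leq j\leq n-1$ vanish since $\Tr_B\mathcal{K}_j=0$; and $\Tr_B(\mathcal{K}_0(\epsilon^n\mathcal{L}_{s,n}(\rho_s))) = \epsilon^n\mathcal{L}_{s,n}(\rho_s)$ because $\Tr\bar\rho_B=1$. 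What survives on the left is $-\,\Tr_B(i[gH_I,\epsilon^{n-1}\mathcal{K}_{n-1}(\rho_s)])$, and substituting $gH_I=g\sum_k A_k\otimes B_k$ and pulling the partial trace through yields precisely \eqref{eq:Lsn}. This part is essentially bookkeeping, since every vanishing mechanism has already appeared verbatim in the proofs of Propositions \ref{prop1} and \ref{prop2}.

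For the second claim, I would substitute \eqref{eq:Lsn} back into the order-$\epsilon^n$ invariance condition and rearrange so that all terms linear in the unknown $\epsilon^n\mathcal{K}_n$ sit on the left. Because $\mathcal{L}_{s,0}(\rho_s) = -iH_A^\times(\rho_s)$, the term $\epsilon^n\mathcal{K}_n(\mathcal{L}_{s,0}(\rho_s))$ contributes $+i\,\epsilon^n\mathcal{K}_n(H_A^\times(\rho_s))$ exactly as in \eqref{eq:Sylvester}, so the left-hand side is the very same Sylvester superoperator $A\,X + X\,B$ with $A \to -i(H_A\otimes\mathbb{I}_B)^\times + \mathcal{I}_A\otimes\mathcal{L}_B$ and $B \to iH_A^\times$, and $X \to \epsilon^n\mathcal{K}_n$. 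The right-hand side is now a known source term $C_n(\rho_s)$ assembled from $-i[gH_I,\epsilon^{n-1}\mathcal{K}_{n-1}(\rho_s)]$, $-\mathcal{K}_0(\epsilon^n\mathcal{L}_{s,n}(\rho_s))$ and $-\sum_{j=1}^{n-1}\epsilon^n\mathcal{K}_j(\mathcal{L}_{s,n-j}(\rho_s))$. Applying the same integral representation \eqref{eq:SylSol}–\eqref{eq:SylvCondition} gives $\epsilon^n\mathcal{K}_n = -\int_0^{+\infty} e^{tA}\,C_n\,e^{tB}\,dt$, and unwinding $e^{tA}$ and $e^{tB}$ by the subsystem-factorization identities produces an integrand of the same shape as in \eqref{eq:k1}, with $A_k^-(t)$ replaced by the appropriate $H_A^\times$-propagated $A$-operators and the $B$-side trajectories propagated under $e^{t\mathcal{L}_B}$.

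The main obstacle is verifying the convergence condition \eqref{eq:SylvCondition}, i.e. that $e^{tA}C_n\,e^{tB}\to 0$ as $t\to\infty$. At orders $1$ and $2$ this followed because the $B$-side always reduces to $e^{t\mathcal{L}_B}$ applied to a traceless operator, and $e^{t\mathcal{L}_B}$ kills the traceless part exponentially. Here I would need the inductive hypothesis to guarantee that the source $C_n$, after the partial-trace gauge has removed its $\mathcal{S}_0$-component (that removal is exactly what subtracting $\mathcal{K}_0(\epsilon^n\mathcal{L}_{s,n}(\rho_s))$ accomplishes), lies in the range on which $\mathcal{L}_B$ — or rather the full fast generator $A$ — is exponentially contracting: concretely, that $\Tr_B(C_n(\rho_s)) = 0$, which one checks by applying $\Tr_B$ to the rearranged equation and seeing it reduces to $0=0$ by the very definition \eqref{eq:Lsn} of $\mathcal{L}_{s,n}$. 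Combined with the fact that the $H_A\otimes\mathbb{I}_B$ and $H_A^\times$ pieces of $A$ and $B$ are purely imaginary (norm-preserving) and commute with the decaying $\mathcal{I}_A\otimes\mathcal{L}_B$ piece, the exponential decay on the traceless sector then carries the whole integrand to zero. I would state this as a clean lemma — traceless source implies the Sylvester integral converges — and note it is the one place where the argument is not purely mechanical.
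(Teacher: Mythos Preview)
Your proposal is correct and follows essentially the same route as the paper's proof: write the order-$\epsilon^n$ invariance condition, apply $\Tr_B$ using the gauge $\Tr_B\mathcal{K}_j=0$ to isolate $\mathcal{L}_{s,n}$, then recognise the remaining equation for $\mathcal{K}_n$ as a Sylvester equation with the same $A,B$ as in Proposition~\ref{prop1}. Your convergence check --- observing that $\Tr_B(C_n)=0$ holds by the very definition of $\mathcal{L}_{s,n}$, so that $e^{tA}C_n e^{tB}\to 0$ via $e^{tA}(\bullet)\to U_{t,A}\,\Tr_B(\bullet)\,\bar\rho_B$ --- is exactly what the paper compresses into the phrase ``the annihilation of terms under $\Tr_B$''; you have simply made that step explicit.
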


\begin{proof}
    The invariance condition at order $\epsilon^n$ writes:
    \begin{multline} \label{eq:high_order}
    \epsilon \mathcal{K}_0(\mathcal{L}_{s,n}(\rho_s)) =
    \epsilon \left(-i(H_A\otimes I_B)^\times + \mathcal{I}_A \otimes \mathcal{L}_B \right)\left( \mathcal{K}_n (\rho_s)\right) 
    - \epsilon
    \mathcal{K}_n\left( \mathcal{L}_{s,0}(\rho_s) \right) \\
    - \epsilon \sum_{m=1}^{n-1} \mathcal{K}_m\left( \mathcal{L}_{s,n-m}(\rho_s) \right)
    -ig (H_I^\times)(\mathcal{K}_{n-1}(\rho_s)) \; .
\end{multline}
The last line contains only known terms at stage $n$. Like in Proposition \ref{prop1}, by taking the partial trace over subsystem B, all the terms involving $\mathcal{K}_n$ vanish with the gauge choice $\rho_s=\Tr_B(\rho)$, while $\Tr_B \mathcal{K}_0 = \mathcal{I}_s$. The terms in $\mathcal{K}_m$ also vanish by partial trace. We thereby obtain the explicit expression \eqref{eq:Lsn} for $\mathcal{L}_{s,n}$, assuming all previous orders were known.

After this, the left-hand side of \eqref{eq:high_order} is also known and the equation takes the Sylvester form $AX+XB=C$ like in the proof of Prop.\ref{prop1}, with $A$ and $B$ unchanged, $\epsilon \mathcal{K}_n$ playing the role of $X$, and $C$ containing all known terms. There remains to show that the integral form solution \eqref{eq:SylSol} converges at this order. Like in the proof of Prop.\ref{prop1}, this is ensured thanks to $\lim_{t\to + \infty} e^{tA}(\bullet) = \mathcal{U}_{t,A}(\Tr_B(\bullet)) \otimes \bar\rho_B$ with $\mathcal{U}_{t,A}$ a unitary evolution, and the annihilation of terms under $\Tr_B$. 
\end{proof} 

When one is only interested in the dynamics $\mathcal{L}_{s,n}$, it may be possible to also apply the trick of Proposition \ref{prop2} iteratively at all orders. This would simplify computations significantly, compared to computing each $\mathcal{K}_m$ explicitly. Investigating this possibility is left for future work.


\section{Examples} \label{sec:examples}

In this section, we give some practical demonstrations of the results presented in Section \ref{sec:result}. The starting point is a harmonic oscillator B subsystem undergoing so-called quantum linear dynamics \cite{nurdin2017linear}:
\begin{equation}\label{eq:fast_lindbladian}
\begin{aligned}
    \mathcal{L}_B(\bullet) = -i \omega_B[b^\dagger b, \bullet] + \kappa_\phi \mathcal{D}[b^\dagger b](\bullet)  + \kappa(1 + n_\text{th})\mathcal{D}[b](\bullet)+\kappa n_\text{th} \mathcal{D}[b^\dagger](\bullet) ,
\end{aligned}
\end{equation}
with $b$ ($b^\dagger$) the bosonic annihilation (creation) operator, $b^\dagger b$ the number operator, $\omega_B$ the undamped harmonic oscillator frequency, $\kappa$ the damping rate for the dissipator $\mathcal{D}$, and $n_\text{th}$ the residual thermal excitation. 
We note that the unique steady state such that $\mathcal{L}_B (\rho) = 0$ is the thermal state,
\begin{equation}\label{eq:thermal}
    \bar \rho_B = \sum_{n=0}^{+\infty} \frac{n_\text{th}^n}{(n_\text{th}+1)^{n+1}}\ketbra{n}
\end{equation}
in the number-basis $\ket{n}$, i.e.~where $b^\dagger b = \sum_{n=0}^{+\infty} n \ketbra{n}{n}$.

We couple this system to an arbitrary A subsystem, using a standard dipolar coupling which is also linear:
\begin{equation}\label{eq:HIbare}
\tilde{H}_I = \tilde{A}\otimes (b + b^\dagger) \; .
\end{equation}

In the vast majority of cases \cite{Haroche}, the timescale separation between $\omega_B$ and $g$ is so large that significant coupling effects on A only happen when $H_A$ contains frequencies close to $\omega_B$. This is what lies behind resonance conditions invoked in more high-level descriptions of physical phenomena. More precisely, going to a rotating frame with $\omega_B$ and averaging the resulting time-dependent dynamics singles out the relevant dominant long-term behavior \cite{sanders2007averaging}. Note that this is mathematically justified only if $\omega_B \gg \kappa$ as well. In physics terms \cite{Haroche}, the resonance has a width of order $\kappa$, thus frequency selection only works for $\kappa \ll \omega_B$.

With the framework developed here, we are able to compare the results of adiabatic elimination in both cases:
\newline [Section \ref{ssec:JaynesCummings}] The physics traditional way, i.e.~first averaging out the large part of $H_A$ in a rotating frame; 
\newline [Section \ref{ssec:Dipolar}] Maintaining all terms in so-called ``inertial frame'', with thus $\omega_B$ and $H_A$ of the same order and non-negligible compared to $\kappa$. The results should match those of Section \ref{ssec:JaynesCummings} when $\omega_B$ and $H_A$ are close and large compared to $\kappa$; otherwise they are new.

\subsection{Jaynes-Cummings interaction}\label{ssec:JaynesCummings}

The Jaynes-Cummings interaction, see e.g.~\cite[Chap.3.4]{Haroche}, corresponds to the canonical theoretical and practical case of coupling subsystem B with a (quasi-) resonant qubit A. After first-order averaging in a frame where both subsystems rotate at the resonance frequency, the interaction $H_I$ boils down to an energy exchange term.

\begin{example}
Consider the bipartite system composed of a harmonic oscillator and a two-level system (qubit), whose dynamics is described by 
    \begin{eqnarray}\label{eq:fastJC}
    \mathcal{L}_B(\bullet) &=& -i \Delta[b^\dagger b, \bullet] + \kappa(1 + n_\text{th})\mathcal{D}[b](\bullet) +\kappa n_\text{th} \mathcal{D}[b^\dagger](\bullet) + \kappa_\phi \mathcal{D}[b^\dagger b](\bullet)\; \nonumber ,\\ \nonumber
    H_A&=&0 \; ,\\ \nonumber
    H_I &=& \sigma_+ \otimes b + \sigma_- \otimes b^\dagger,
\end{eqnarray}
with $\mathcal{H}_A = \text{span}\{\ket{g},\ket{e}\}$ and $\sigma_- = (\sigma_+)^\dagger = \ketbra{g}{e}$ the qubit lowering operator (from excited to ground state). With Proposition \ref{prop2}, assuming $\kappa \gg g$, the second order reduced dynamics is given by
\begin{multline}\label{eq:JCresult}
        \frac{d}{dt} \rho_s = \epsilon^2\mathcal{L}_{s,2}(\rho_s) = -i (1+2n_{th})\frac{4 \Delta g^2}{\abs{\gamma}^2}\left[\frac{\sigma_z}{2}, \rho_s \right] 
        + (1+n_\text{th}) (\kappa + \kappa_\phi)\frac{4 g^2}{\abs{\gamma}^2} \mathcal{D}[\sigma_-](\rho_s)  
         + n_\text{th}(\kappa + \kappa_\phi)  \frac{4 g^2}{\abs{\gamma}^2} \mathcal{D}[\sigma_+](\rho_s),\;
\end{multline}
where $\sigma_z = \ketbra{g}{g}-\ketbra{e}{e}$ and $\gamma = \kappa + \kappa_\phi + 2 i \Delta$.
\end{example}
\noindent\textit{Solution.}
The coupling corresponds to our general setting with $B_1 = b$, $B_2 = b^\dagger$, $A_1 = \sigma_+$, $A_2 = \sigma_-$. The zeroth-order dynamics $\mathcal{L}_{s,0}$ vanishes since $H_A=0$. Note that $\mathcal{L}_B$ in \eqref{eq:fastJC} still features the unique steady state \eqref{eq:thermal}, and since $\bra{n}b\ket{n}=0$ for all $n$ we also have $\mathcal{L}_{s,1}=0$.
    To recover the second order reduced dynamics, we start by computing the Heisenberg representation of the coupling operators (Definition 3):
    \begin{equation}\label{eq:time-dep-b}
    \begin{gathered}
        \frac{d}{dt}b = \mathcal{L}_B^*(b) = - \tfrac{\gamma}{2} b, \\ \text{thus} \quad b (t) = e^{- \tfrac{\gamma}{2} t} b(0),
    \end{gathered}
    \end{equation}
with $\gamma$ defined as in the statement. In this step, we highly benefit from assuming a  linear quantum system \cite{nurdin2017linear} on B in order to obtain such easy closed-form solution. Indeed, for a general quantum system B, the computation of \eqref{eq:Bkt} in Definition 3 can be a significant bottleneck.

From there, with easy calculations involving the geometric series, we obtain the coefficients $c_{k,l}(t)$ and $\bar c_{k,l}(t)$ mentioned in \eqref{eq:heis}. Since $H_A=0$, the operators $A_k^-(t)$ are time independent, so the integral to be solved in Proposition \ref{prop2} is simply $\int_0^{+\infty} e^{- s \gamma} ds = \tfrac{1}{\gamma}$. By rearranging the terms, $\mathcal{L}_{s,2}$ can be put in standard Lindblad form to obtain \eqref{eq:JCresult}.
{\hfill$\square$}

The reduced model \eqref{eq:JCresult} thus contains a unitary shift with B pulling on the frequency of A (Hamiltonian in $\sigma_z$); and it translates thermal dissipations on B into corresponding dissipations in $\sigma_-$ and $\sigma_+$ on A, yet with decreasing effect as $\kappa$ increases. This result is well-known \cite{azouit2017towards}. It readily generalizes to a subsystem A of higher dimension. One just replaces, in the model and in the resulting dissipator, $\sigma_-$ by $A$ and $\sigma_+$ by $A^\dagger$; in the resulting slow Hamiltonian, one replaces $(1+2n_{th}) \tfrac{\sigma_z}{2}$ by $(n_{th} A A^\dagger - (1+n_{th})A^\dagger A)$.

\subsection{Treatment with fast unitary dynamics in inertial frame}\label{ssec:Dipolar}

We now consider the same setting \emph{in inertial frame}, i.e.~without going to rotating frame and averaging, but thus with Hamiltonian $\tilde{H}_A$ (and $\omega_B\, b^\dag b$) not small compared to $\kappa$. The model corresponds to \eqref{eq:fast_lindbladian}, \eqref{eq:HIbare} with, for a qubit, $\tilde{A} = (\sigma_-+\sigma_+) = \sigma_x$ and $H_A$ proportional to $\sigma_z$.
\begin{example}
    Consider the bipartite system described by:
    \begin{eqnarray}\label{eq:Ex2Setting}
        \mathcal{L}_B(\bullet) &=& -i \omega_B[b^\dagger b, \bullet] + \kappa(1 + n_\text{th})\mathcal{D}[b](\bullet) +\kappa n_\text{th} \mathcal{D}[b^\dagger](\bullet) + \kappa_\phi \mathcal{D}[n](\bullet)\; \nonumber ,\\ \nonumber
        H_A &=& -\omega_{\rm eg}\tfrac{\sigma_z}{2}, \nonumber \\
        H_I  &=& \sigma_x \otimes(b+b^\dagger), \nonumber
    \end{eqnarray}
    with the energy gap $\omega_{\rm eg} = \omega_e - \omega_g$, $\sigma_x = \ketbra{e}{g} + \ketbra{g}{e}$, $\sigma_z = \ketbra{g}{g}-\ketbra{e}{e}$. With Proposition \ref{prop2}, assuming $\kappa \gg g$, the second order reduced dynamics is given by  
\begin{multline}\label{eq:second_fast}
\tfrac{d}{dt}\rho_s = \mathcal{L}_{s,0}(\rho_s) + \epsilon^2 \mathcal{L}_{s,2}(\rho_s) =
  -i[-\omega_{\rm eg}\tfrac{\sigma_z}{2},\rho_s] -i g^2 Y \left[\frac{\sigma_z}{2}, \rho_s \right]
 + g^2 \sum_{\ell,\ell' \in \{+,-\}}  
X_{\ell \ell'} \left(\sigma_{\ell'} \rho_s \sigma_\ell^\dagger -\frac{\rho_s \sigma_\ell^\dagger \sigma_{\ell'} + \sigma_\ell^\dagger \sigma_{\ell'} \rho_s}{2}  \right)
\; , 
\end{multline}
with the hermitian matrix $X$ and coefficient $Y$ defined by: 
\begin{equation}
\begin{split}
    X_{\ell\ell'} &= r_{\ell'} + r_\ell^* + e_{\ell'} + e_\ell^* \\
    Y &= \tfrac{1}{2i}(
    r_++e_+-r_+^*-e_+^*-r_--e_-+r_-^*+e_-^*),
\end{split}
\end{equation}
and the coefficients $r_{\ell}$ and $e_{\ell}$, $\ell \in \{+, -\}$: 
\begin{equation}
        r_\pm  = \frac{2(1+n_{\rm th})}{\gamma_{\pm}}, \quad e_\pm = \frac{2n_{\rm th}}{{\gamma_{\mp}}^*},
\end{equation}
where $\gamma_{\pm} = \kappa + \kappa_\phi + 2 i (\omega_B \pm \omega_{\rm eg})$.
\end{example} 

\noindent\textit{Solution.}
The coupling corresponds to the above setting with just $A_1 = \sigma_x$, $B_1 = b + b^\dagger$. The zero order, given by \eqref{eq:order0}, is $\mathcal{L}_{s,0} = -i \omega_{eg} \sigma_z^\times$. The first order $\mathcal{L}_{s,1}=0$ for the same reason as in the previous example. To obtain the second order reduced dynamics via Definition \ref{def:Bk}, the time dependence for $b(t)$ is the same as in \eqref{eq:time-dep-b}, with $\omega_B$ replacing $\Delta$ in the definition of $\gamma$. With this one can easily obtain according to \eqref{eq:heis} the sole coefficients $c_{11}(t) = (n_{th}+1)e^{-\gamma t/2}+ n_{th} e^{-\gamma^* t/2}$ and $\tilde c_{11}(t) = c_{11}(t)^*$. Since we just have $A_1^-(t) = \sigma_+ e^{-i \omega_{\rm eg}t} + \sigma_- e^{+i \omega_{\rm eg}t}$, we compute the time integrals 
\begin{equation}
\begin{split}
    &\int_0^{+\infty} c_{11}(t) A_1^-(t) dt = \sum_{\ell \in \{+, -\}} (r_\ell  + e_\ell ) \sigma_\ell, \\
    &\int_0^{+\infty} \tilde c_{11}(t) A_1^-(t) dt = \sum_{\ell \in \{+, -\}} (r_\ell^* + e_\ell^*) \sigma_\ell^\dagger.
\end{split}
\end{equation}
The statement then follows after straightforward calculations and relabeling.
{\hfill$\square$}\\

The result generalizes to higher-dimensional subsystems A without much difficulties. Replace $\sigma_x$ by $(A+A^\dagger)$ and $ H_A = \sum_{s=1}^{d_A} \omega_s \ketbra{s}$ in the problem statement. The computations of $c_{11}(t)$ and $\tilde{c}_{11}(t)$ involving exponentials remains unchanged. The time evolution of the operators $A_k^-(t)$ will be given by
\begin{equation}\label{eq:heisenberg_a}
    A_k^-(t) = \sum_{n,m=1}^{d_A} e^{-i(\omega_n - \omega_m)t} \bra{n}A_k\ket{m} \ketbra{n}{m} \; .
\end{equation}
With these expressions, we can thus in principle easily compute the integrals
\begin{equation}
    \int_{0}^{+\infty} c_{k,l}(t) A_k^-(t) dt \quad \text{and} \quad \int_0^{+\infty} \tilde{c}_{k,l}(t) A_k^-(t) dt,
\end{equation}
and obtain the second order reduced dynamics.\\

\noindent \textbf{Interpretation.} The following observations are in order about the result in \eqref{eq:second_fast}.
\begin{itemize}
\item One checks that for $\omega_{eg}=\omega_B$ large, averaging the reduced system in a frame rotating with $U(t) = e^{i \omega_{eg} t \sigma_z/2}$, yields back the result of Example 1 as expected. Without averaging, terms in $X_{+-}$ and $X_{-+}$ remain. Note though that the averaging condition now relaxes to $\omega_B,\omega_{eg} \gg \epsilon^2 \kappa$, thanks to confining ourselves to a manifold with slow dynamics.
\item For $\omega_{eg}=0$ instead, we have $r_{+}=r_-$ and $e_+=e_-$ such that $Y=0$ and $X$ is proportional to the all-ones matrix. This singularity implies a single dissipation channel, in $\sigma_x$ i.e.~proportional to the coupling operator, in agreement with the result of \cite{azouit2017towards} when $H_A=0$.
\item The formula \eqref{eq:second_fast} thus allows us to capture all intermediate scaling cases.
\item To have a completely positive Lindblad form, interpretable as a standalone open quantum system, the matrix $X$ in \eqref{eq:second_fast} should be positive. Here, $X$ has a positive trace, but its \emph{determinant is independent of $n_{th}$ and negative as soon as $\omega_{eg} \neq 0$}. Then  such interpretation fails, as also happened in \cite{tokieda2022completev2} for the partial trace gauge ($\rho_s = \Tr_B(\rho)$). This raises the natural question of whether an alternative gauge choice could restore a positive $X$, as seen in \cite{azouit2016adiabatic} for the second-order case with $H_A=0$, and in \cite{forni2019palette} for $H_A \neq 0$ under specific conditions. More recently, \cite{tokieda2022completev2} demonstrated that for $H_A=0$ at 4th order and for a class of parameters, no gauge choice can restore positivity. Further investigation into gauge choices and the preservation of positivity is left for future work. See also next items.
\item The corresponding Bloch equations for $\rho=\frac{I+x\sigma_x+y\sigma_y+z\sigma_z}{2}$ are:
\begin{eqnarray}
\nonumber 
\dot{x} &=& \left( \omega_{eg} -g^2 Y \tfrac{2n_{th}}{1+2n_{th}} \right)\, y \\
\nonumber 
\dot{y} &=& \left( -\omega_{eg} +g^2 Y \tfrac{2+2n_{th}}{1+2n_{th}} \right)\, x \;
- g^2(1+2n_{th})\, r_z\, y
\\
\label{eq:Bloch}
\dot{z} &=& - g^2(1+2n_{th}) r_z \, (z-\bar{z})
\end{eqnarray}
where $r_z = 4(\kappa+\kappa_\phi)(\frac{1}{|\gamma_+^2|}+\frac{1}{|\gamma_-^2|})$ and $\bar{z} = \frac{|\gamma_+|^2-|\gamma_-|^2}{(|\gamma_+|^2+|\gamma_-|^2)(1+2n_{th})}
$. For $\omega_{eg}\neq 0$ this system converges to $x=y=0$, $z=\bar{z}$. For $\omega_{eg}=0$, the $x$ coordinate remains invariant and the two others converge exponentially to $y=0$ and $z=0$. 
\item The equations \eqref{eq:Bloch} can thus best be seen as just coordinates, accurately describing the Lindblad form \eqref{eq:Ex2Setting} when restricted to an invariant subspace inside $\mathcal{H}_A \otimes \mathcal{H}_B$. In  \cite{tokieda2022completev2}, a necessary and sufficient inequality on the spectrum of Bloch equations is presented to decide whether another coordinate choice (thus not imposing $\rho_s = \text{Tr}_B(\rho)$) may yield a completely positive reduced model, identifiable with a qubit. Remarkably, for any parameter values, our particular result \eqref{eq:Bloch} appears to lie on the boundary of these inequalities. Higher orders of the expansion thus have to be examined before concluding.
\end{itemize}

\section{Conclusion}

This paper leverages the integral solution of Sylvester equation to propose explicit formulas for quantum model reduction via adiabatic elimination, when the remaining subsytem undergoes fast unitary dynamics. From a formal viewpoint, this completes the picture of spectral block-decomposition, by assuming only a timescale separation on the real part of the eigenvalues, i.e.~the degrees of freedom which vanish \emph{versus} remain in the long term. This is the linear version of a central manifold with non-trivial motion in general systems theory. From a practical viewpoint, we have illustrated how our results avoid the need to move to the interaction picture and perform an averaging approximation before treating the model reduction. 

Future work should allow us to prove if other gauge choices could restore complete positivity of the reduced model, like in  \cite{forni2019palette}, in our generalized setting too. Additionally, it should generalize the computations to any linear quantum system \cite{nurdin2017linear}, providing insights on how their various tunings can be integrated into quantum dissipation engineering. We would also use this tool to further explore corrections to averaging approximations, also called ``Rotating Wave Approximation'', RWA, and ubiquitous in quantum engineering by making things on/off-resonant. Indeed, while averaging expansions can in principle be carried out at higher order, they are a priori not converging, unlike the block-spectral decomposition of the present paper. 

\section*{Acknowledgment}

The authors would like to thank Mazyar Mirrahimi, Jérémie Guillaud, Samuel Deléglise, Masaaki Tokieda, Lev-Arcady Sellem and other colleagues for useful discussions. This research has been funded by ANR grants HAMROQS and MECAFLUX (French Research Agency), by Plan France 2030 through the project ANR-22-PETQ-0006, and by the European Research Council (ERC) under the European Union’s Horizon 2020 research and innovation program (grant agreement No. 884762).


\bibliographystyle{abbrv}
\bibliography{biblio}

\begin{thebibliography}{10}

\bibitem{albash2012quantum}
T.~Albash, S.~Boixo, D.~A. Lidar, and P.~Zanardi.
\newblock Quantum adiabatic markovian master equations.
\newblock {\em New J. Physics}, 14(12):123016, 2012.

\bibitem{azouit2017towards}
R.~Azouit, F.~Chittaro, A.~Sarlette, and P.~Rouchon.
\newblock Towards generic adiabatic elimination for bipartite open quantum
  systems.
\newblock {\em Quantum Science and Technology}, 2(4):044011, 2017.

\bibitem{azouit2016adiabatic}
R.~Azouit, A.~Sarlette, and P.~Rouchon.
\newblock Adiabatic elimination for open quantum systems with effective
  lindblad master equations.
\newblock In {\em IEEE 55th Conference on Decision and Control (CDC)}, pages
  4559--4565, 2016.

\bibitem{finkelstein2020adiabatic}
D.~Finkelstein-Shapiro, D.~Viennot, I.~Saideh, T.~Hansen, T.~Pullerits, and
  A.~Keller.
\newblock Adiabatic elimination and subspace evolution of open quantum systems.
\newblock {\em Phys.Rev.A}, 101(4):042102, 2020.

\bibitem{forni2019palette}
P.~Forni, T.~Launay, A.~Sarlette, and P.~Rouchon.
\newblock A palette of approaches for adiabatic elimination in bipartite open
  quantum systems with hamiltonian dynamics on target.
\newblock In {\em IEEE 58th Conference on Decision and Control (CDC)}, pages
  1362--1368, 2019.

\bibitem{forni2018adiabatic}
P.~Forni, A.~Sarlette, T.~Capelle, E.~Flurin, S.~Del{\'e}glise, and P.~Rouchon.
\newblock Adiabatic elimination for multi-partite open quantum systems with
  non-trivial zero-order dynamics.
\newblock In {\em IEEE 57th Conference on Decision and Control (CDC)}, pages
  6614--6619, 2018.

\bibitem{Jeremie}
J.~Guillaud.
\newblock Thermal adiabatic elimination with adjoint lindbladian.
\newblock inria Technical Note, 2020.

\bibitem{Haroche}
S.~Haroche and J.-M. Raimond.
\newblock {\em Exploring the quantum: atoms, cavities, and photons}.
\newblock Oxford university press, 2006.

\bibitem{RegentRouchonCDC2023}
F.-M. Le~R\'egent and P.~Rouchon.
\newblock Heisenberg formulation of adiabatic elimination for open quantum
  systems with two timescales.
\newblock In {\em IEEE 62nd Conference on Decision and Control (CDC)}, pages
  7202--7207, 2023.

\bibitem{RegentRouchonPRA2024}
F.-M. Le~R\'egent and P.~Rouchon.
\newblock Adiabatic elimination for composite open quantum systems:
  Reduced-model formulation and numerical simulations.
\newblock {\em Phys.Rev.A}, 109:032603, Mar 2024.

\bibitem{metelmann2015nonreciprocal}
A.~Metelmann and A.~A. Clerk.
\newblock Nonreciprocal photon transmission and amplification via reservoir
  engineering.
\newblock {\em Phys.Rev.X}, 5(2):021025, 2015.

\bibitem{mirrahimi2014dynamically}
M.~Mirrahimi, Z.~Leghtas, V.~V. Albert, S.~Touzard, R.~J. Schoelkopf, L.~Jiang,
  and M.~H. Devoret.
\newblock Dynamically protected cat-qubits: a new paradigm for universal
  quantum computation.
\newblock {\em New J.Physics}, 16(4):045014, 2014.

\bibitem{SamPap}
B.-L. Najera-Santos~{\em et al.}
\newblock High-sensitivity ac-charge detection with a mhz-frequency fluxonium
  qubit.
\newblock {\em Phys. Rev. X}, 14:011007, 2024.

\bibitem{nurdin2017linear}
H.~I. Nurdin and N.~Yamamoto.
\newblock Linear dynamical quantum systems.
\newblock {\em Analysis, Synthesis, and Control}, 2017.

\bibitem{poyatos1996quantum}
J.~Poyatos, J.~I. Cirac, and P.~Zoller.
\newblock Quantum reservoir engineering with laser cooled trapped ions.
\newblock {\em Phys.Rev.Letters}, 77(23):4728, 1996.

\bibitem{reiter2012effective}
F.~Reiter and A.~S. S{\o}rensen.
\newblock Effective operator formalism for open quantum systems.
\newblock {\em Phys.Rev.A}, 85(3):032111, 2012.

\bibitem{sanders2007averaging}
J.~A. Sanders, F.~Verhulst, and J.~Murdock.
\newblock {\em Averaging methods in nonlinear dynamical systems}, volume~59.
\newblock Springer, 2007.

\bibitem{tokieda2022completev2}
M.~Tokieda, C.~Elouard, A.~Sarlette, and P.~Rouchon.
\newblock Complete positivity violation of the reduced dynamics in higher-order
  quantum adiabatic elimination.
\newblock {\em Phys. Rev. A}, 109:062206, Jun 2024.

\end{thebibliography}
%



\end{document}